\titlespacing\subsection{0pt}{12pt plus 4pt minus 2pt}{1pt plus 2pt minus 2pt}
\newtheorem{lem}{Lemma}
\begin{document}

\title{Controlled Perturbation-Induced Switching \linebreak in Pulse-Coupled Oscillator Networks }

\author{Fabio Schittler Neves}
\affiliation{Network Dynamics Group, Max Planck Institute for Dynamics and Self-Organization, G{\"o}ttingen, D-37073, Germany.}
\affiliation{Bernstein Center for Computational Neuroscience (BCCN), G{\"o}ttingen, Germany.}
\email{schittler@gmail.com}

\author{Marc Timme}
\affiliation{Network Dynamics Group, Max Planck Institute for Dynamics and Self-Organization, G{\"o}ttingen, D-37073, Germany.}
\affiliation{Bernstein Center for Computational Neuroscience (BCCN), G{\"o}ttingen, Germany.}
\email{timme@nld.ds.mpg.de}

\begin{abstract}
Pulse-coupled systems such as spiking neural networks exhibit nontrivial invariant sets in the form of attracting yet unstable saddle periodic orbits where units are synchronized into groups. Heteroclinic connections between such orbits may in principle support switching processes in those networks and enable novel kinds of neural computations. For small networks of coupled oscillators we here investigate under which conditions and how system symmetry enforces or forbids certain switching transitions that may be induced by perturbations. For networks of five oscillators we derive explicit transition rules that for two cluster symmetries deviate from those known from oscillators coupled continuously in time. A third symmetry yields heteroclinic networks that consist of sets of all unstable attractors with that symmetry and the connections between them. Our results indicate that pulse-coupled systems can reliably generate well-defined sets of complex spatiotemporal patterns that conform to specific transition rules. We briefly discuss possible implications for computation with spiking neural systems.
\end{abstract}
\pacs{05.45.Xt, 89.75.-k, 87.18.Sn}
\maketitle

\section*{Introduction}
Heteroclinic connections among saddle states are known to support non-trivial switching dynamics in networks of units coupled continuously in time \cite{GH1988,H1993,R2001,PD2005,DT2006}. Interesting recent work furthermore suggests that heteroclinic networks in state space may be used to encode a large number of spatiotemporal patterns if the transition between different states is controllable \cite{AB2004}. Supplementing such systems with certain additional features may thus enable a new kind of computation \cite{AB2005}.

Networks of pulse-coupled oscillators, that model, e.g., the dynamics of spiking neural networks, constitute hybrid systems that are very distinct from systems coupled continuously in time. In pulse-coupled hybrid systems pulses interrupt the otherwise smooth time evolution at discrete event times when pulses are sent or received. Such networks may exhibit unstable attractors \cite{TWG2002}, unstable saddle periodic orbits that are attractors in the sense of Milnor \cite{K1997}. Recent works indicate that unstable attractors may generically occur in systems with symmetry \cite{AT2005,BES2008-1} and that such saddle periodic orbits may be connected to heteroclinic networks in a standard way, but with some anomalous features \cite{BES2008-2,CT2008}. In particular, due to the attractor nature of the periodic orbits, switching among saddles requires external perturbations. It was known before that in non-hybrid systems with non-attracting saddles, such perturbations may in principle direct the switching path. In this work we study under which conditions and precisely how small controlled perturbations can exploit heteroclinic connections in pulse-coupled systems to support switching processes among saddle states, a key prerequisite for computation by heteroclinic switching. 

The results may be of particular relevance for neural systems where pulses are electric action potentials (spikes) generated by neurons because spatiotemporal switching patterns of spikes have been suggested to underly information processing \cite{AT2005-2,R2006}.

This article is divided into four main sections. After introducing the model and explaining our analytical approach in the first section, in the second we present the most persistent attractors and their symmetries. In the third section, we derive the dynamic response of the system to single oscillator perturbations and provide a local stability analysis. Finally, we conclude discussing the relation between switching processes in the pulse-coupled systems considered to those in systems coupled continuously in time. We also briefly discuss potential implications for neural coding and paths to future investigations.

\section{Pulse-coupled network}\label{sDescription}
Consider a network of $N$ oscillators that are connected homogeneously all-to-all without self-connections through delayed pulse-couplings. The state of each oscillator $i\in \{1,\ldots,N \}$ at time $t$ is specified by a single phase-like variable $\phi_{i}(t)$ \cite{MS1990}. In the absence of interactions, its dynamics is given by
\begin{equation}\label{eq1}
\frac{d\phi_{i}}{dt}=1, \qquad 0 \leq  \phi_{i} \leq 1.
\end{equation}
When oscillator $i$ reaches a threshold, $\phi_{i}(t^-)=1$, its phase is reset to zero, $\phi_{i}(t)=0$, and the oscillator is said to send a pulse. Such pulse is sent to all other oscillators which receive this signal after a delay time $\tau$. The incoming signal induces a phase jump
\begin{equation}\label{eq2}
\phi_{i}\left(t\right)=H_{\epsilon}(\phi_{i}(t^{-}))=U^{-1}\left[U\left(\phi_i(t^{-})\right)+\epsilon\right], 
\end{equation}
which depends on the instantaneous phase $\phi_{i}(t^-)$ of the post-synaptic oscillator and the excitatory coupling strength $\epsilon>0$. The phase dependence is determined by a twice continuously differentiable potential function $U(\phi)$ that is assumed to be strictly increasing ($U^{'}(\phi) > 0$), concave down ($U^{''}(\phi)< 0$), and normalized such that $U(0)=0$, $U(1)=1$. As shown in \cite{TWG2002, TWG2003}, this phase dynamics is equivalent to the ordinary differential equations
\begin{equation}\label{neweq2}
\frac{dV_{i}}{dt^{'}}=f(V_{i})+S_{i}(t^{'}),
\end{equation}
where
\begin{equation}
S_{i}(t^{'})=\sum_{ j=1 \atop j\neq i }^{N}\sum_{k \in \mathbb{Z}} \epsilon \delta \left(  t -\tau^{'} -t^{'}_{jk} \right),
\end{equation}
is a sum of delayed $\delta$-currents induced by presynaptic oscillators.  Oscillator $j$ sends its $k$th pulse at time $t^{'}_{jk}$ whenever its phase variable crosses threshold, $V_{j}(t^{'-}_{jk}) \geq 1$; thereafter, it is instantaneously reset, $V_{j}(t^{'}_{jk})\rightarrow 0$.  The $k$th pulse of oscillator $j$ is received by $i$ after a delay $\tau^{'}$. The positive function $f(V) > 0$ yields a free ($S_{i}(t^{'})\equiv 0$) solution $V_i(t^{'}):=V(t^{'})=V(t^{'}+T_{0})$ of intrinsic period $T_{0}$. The above function $U(\phi)$ is related to this solution via
\begin{equation}\label{neweq3}
U\left(\phi\right)=V\left(\phi T_{0}\right),
\end{equation}
defining a natural phase $\phi$ by rescaling the time axis, $t=t^{'}/T_{0}$ and $\tau=\tau^{'}/T_{0}.$

We focus on the specific form $U\left(\phi_{i} \right)=U_{IF}\left(\phi_{i} \right)=\frac{I_{i}}{\gamma}\left(  1-e^{-\phi_{i} T_{IF} } \right)$ that represents the integrate-and-fire oscillator defined by $f(V)=I-\gamma V$. Here $I>1$ is a constant external input and $T_{IF}=\frac{1}{\gamma}\log{\left(1-\frac{\gamma}{I}\right)}^{-1}$ the intrinsic period of an oscillator. Any $U(\phi)$ sufficiently close to $U_{IF}(\phi)$ give qualitatively similar results.

After defining the dynamics of the network elements we can define its collective state as a phase vector,
\begin{equation}\label{eq5}
\bm{\phi} = \left(\phi_{1},\phi_{2},\ldots,\phi_{N}\right),
\end{equation}
where each $\phi_{i}$ describes the phase of oscillator $i$. Their dynamics is governed by (\ref{eq1}) and (\ref{eq2}). The difference in phase among them will define the macroscopic states of the network, as explained in the next section.

The event-based updating presented above brings two main advantages: It yields exact analytical solutions of state space trajectories and substantially reduces the simulation time compared to numerical integration with fixed small time steps.

\section{Periodic orbit dynamic and symmetries}\label{sec-PDY}
Here we define and explicitly study the dynamics of partially synchronized states, periodic orbits where groups of oscillators are identically synchronized into clusters, for three main symmetries of $N=5$ oscillators. The analysis reveals mechanisms of perturbation-induced switching transitions that critically depend on the local stability properties of cluster periodic orbits. As we show below, stability in turn is determined by whether a cluster receives only sub-threshold input during one period (``unstable'' cluster) or it also receives supra-threshold input (``stable'' cluster), the only two options available. Thus, similar switching mechanisms for a given symmetry  will prevail also for larger $N$, cf \cite{TWG2003}, and contribute to much more complex saddle state transitions, cf figure \ref{diagram-s2s2s1}.
As shown in the last section, when a constant external input $I$ to a single oscillator $i$ is sufficiently strong to drive the membrane potential to cross its threshold ($U^{'}>0$), the potential dynamics becomes periodic with period $T_{0}$. It was known before that networks of such pulse-coupled oscillators may exhibit different invariant states including partially synchronized states \cite{DT2006,K1997,UKT1995,TWG2002,KGT2009}.

To explore the possible unstable attractors we systematically varied the parameters and the initial conditions for our system and found numerically that three clustered states present those state symmetries most persistent to perturbations. Two of these states are composed of two clusters, with permutation symmetries $S_{3} \times S_{2}$ and $S_{4} \times S_{1}$, respectively; another one is composed of two clusters and one single element, with permutation symmetry $S_{2} \times S_{2} \times S_{1}$. The event-based analyses of these states is based in return maps that are presented in detail in tables \ref{ut-s3s2-1}, \ref{ut-s3s2-2}, \ref{ut-s4s1} and \ref{ut-s2s2s1}. For each of these three cluster periodic orbits, the event sequence of sending and reception of pulses fully defines the type of periodic orbit such that the analytical conditions for existence of a family of such orbits can be directly read from these tables. In particular, these three families of periodic orbits exist for an open set of parameters  close to the three examples numerically specified in tables \ref{dt-s3s2}, \ref{dt-s4s1} and \ref{dt-s2s2s1}. The existence conditions for each periodic orbit naturally imply that the phases of all oscillators exactly return to the same value after a fixed period; at the same time, the predefined event sequence must be kept. 

Throughout this work, we represent the dynamical states relative to the symmetries $S_{3} \times S_{2}$, $S_{4} \times S_{1}$ and $S_{2} \times S_{2} \times S_{1}$, respectively, by the phase vectors
\begin{subequations}
\begin{equation}
\bm{\phi} = \left(a,a,a,b,b\right),
\end{equation}    
\begin{equation}
\bm{\phi} = \left(a,a,a,a,b\right),
\end{equation}
\begin{equation}
\bm{\phi} = \left(a,a,b,b,c\right),
\end{equation}
\end{subequations}
where each element represents one oscillator and the letters indicate to which cluster it belongs. In these periodic orbits the differences in phase $\left(a-b\right)$, $\left(a-c\right)$ and $\left(b-c\right)$ will change in time in a periodic manner, while the cluster configuration remains the same. In this notation the elements labeled as 'a' belong to an unstable cluster, as 'b' to a stable cluster, while 'c' represents a single element that reacts stably to small perturbations (see below).

It is important to emphasize that this system exhibits symmetric connections and that the parameters, $I$, $\gamma$, $\epsilon$ and $\tau$, are global (see section \ref{sDescription}).  As a consequence, the initial condition controls the final attractor and determines which of the permutation-equivalent states is obtained. By the same symmetry argument, the number of permutation-equivalent configurations for each state symmetry is given by the number of ways we can form the vectors presented above, which results in 10, 5, and 30 states, respectively. In the next section we study the stability of these cluster states and the possible state transitions among them in the presence of small perturbations.

\section{Stability and switching properties}\label{switching}
In this section we will study, case by case, the dynamics and stability of cluster periodic orbits presented in the last section. First we show that these periodic orbits actually are unstable attractors, and later we study the possible transitions between different states in response to small perturbations.

To study the local stability of these attractors, we introduce a perturbation vector,
\begin{equation}
\bm{\delta}(n) = \left(\delta_{2}(n),\delta_{3}(n),\delta_{4}(n),\delta_{5}(n)\right),
\end{equation}
that has four components since only the relative phases among the oscillators are relevant ($\delta_{1}(n) \equiv 0$). The analysis presented here consists of a study of the temporal evolution of this perturbation vector at each cycle. Thus, 
\begin{equation}
\delta_{i}(n) := \phi_{i}(t_{1,N})-\phi_{i}^{*}(t)
\end{equation}
are the perturbations to phases on the periodic orbit just after oscillator one has sent its $n$th pulse and been reset, i.e. $\delta_{i} \equiv 0$.

After a small enough initial perturbation that is added to the phase vector at some point of the unperturbed dynamic, the temporal evolution of the perturbation vector is defined as the difference between this perturbed vector after one cycle of the system dynamic and the unperturbed phase vector at the same time. Analytically tracking the periodic orbit dynamics (cf tables \ref{ut-s3s2-1}, \ref{ut-s4s1}, and \ref{ut-s2s2s1}) yields the perturbation perturbation vector after one cycle as a function of the perturbation in the previous cycle, 
\begin{equation}
\bm{\delta}(n+1) = \bm{F}\left(\bm{\delta}(n)\right),
\end{equation}
which can be linearly approximated by
\begin{equation}\label{linear-aprox}
\bm{\delta}(n+1) \doteq  \bm{J}\bm{\delta}(n),
\end{equation}
where $\bm{J}$ is the Jacobian matrix at $\bm{\delta}(n)=0$, describing the local dynamics.

After analyzing the local stability properties, we study non-local effects in response to single oscillator perturbations. The procedure consists of perturbing only one oscillator at each time. We consider negative perturbations, instantaneous decrements on the phase, and positive ones, instantaneous increments on the phase. When possible, transition diagrams are included.
\begin{figure}[]
\begin{center}
\includegraphics[width=11.0cm,clip,angle=0]{./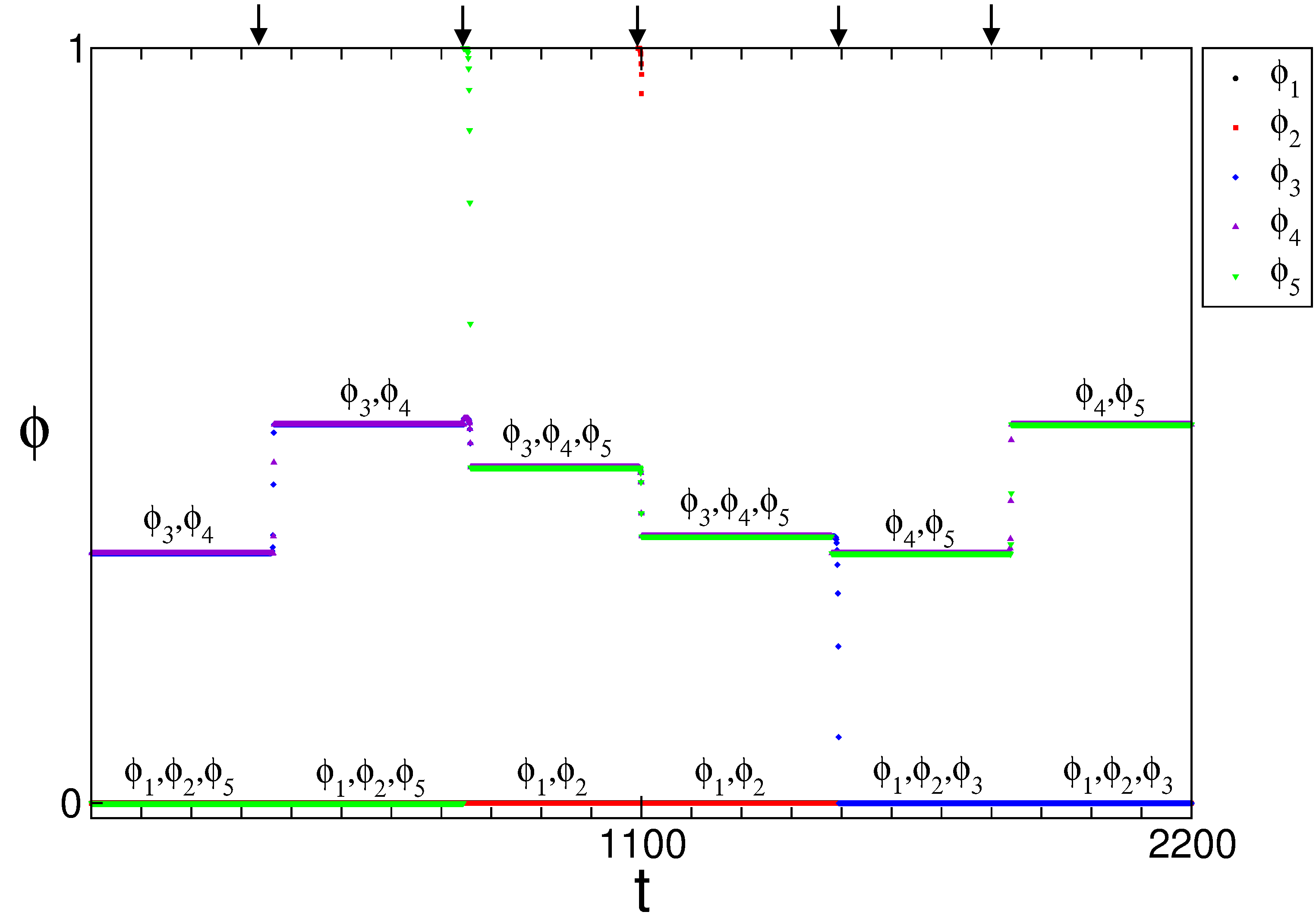}
\end{center}
\caption[1st-entry]{  \footnotesize Example of perturbation-induced switching in a $S_{3} \times S_{2}$ state set. The response of the system to a sequence of five negative single oscillator perturbations preserving a $S_{3} \times S_{2}$ clustered symmetry ($\tau=0.31$, $\epsilon=0.025$, $I=1.04$, $\gamma=1$). The phases of all oscillators are plotted at the moment when oscillator 1 is reset, each color representing the phase of one oscillator. There are transitions through two steps, where in a first moment the cluster $S_{3}$ is unstable, after one perturbation (as shown by the second and fourth perturbation) it reaches a new configuration with the cluster $S_{2}$ in a unstable phase position, a second perturbation (as shown in the first, third and fifth perturbations) is needed to put the system in the initial phase difference again, maintaining the cluster components but changing its stability. The symmetry of the unstable attractors are preserved. The sequence of states given by the plateaus are $(a,a,b,b,a)^{*} \rightarrow (a,a,b,b,a)\rightarrow (b,b,a,a,a)^{*} \rightarrow (b,b,a,a,a) \rightarrow (a,a,a,b,b)^{*} \rightarrow (a,a,a,b,b)$, where the star indicate the states where $S2$ is in a unstable phase.}
\label{fig3}
\end{figure}
\subsection{Clustered state $S_{3} \times S_{2}$}
For the clustered state $S_{3} \times S_{2}$ we have the temporal evolution for the vector $\bm{\delta}$ in one cycle, (cf tables \ref{pt-s3s2} and \ref{ut-s3s2-1}) assuming $\delta_{2}<\delta_{3}$, and $\delta_{4}<\delta_{5}$, given by:
\begin{equation}\label{delta-s3s2}
\bm{\delta}(n+1) = \bm{\phi} - \left(0,0,A,A\right)
\end{equation}
where $\bm{\phi}$ is the phase vector given by the last row of table \ref{pt-s3s2} and the vector $\left(0,0,A,A\right)$ represents the unperturbed cycle (see table \ref{ut-s3s2-1}). By (\ref{linear-aprox}) and (\ref{delta-s3s2}), we obtain the following Jacobian matrix:
\begin{equation} \label{J-s3s2}
{\frac{\partial\bm{\delta}(n+1)}{\partial  \bm{\delta}(n)} \bigg{|}}_{\bm{\delta}(n)=0} = \left( 
\begin{array}{cccc}
\alpha & 0 & 0 & 0 \\
j_{21} & \beta & 0 & 0 \\
\gamma & j_{32} & 0 & 0 \\ 
\gamma & j_{42} & 0 & 0 \\
\end{array} \right) ,
\end{equation}
where $\alpha$, $\beta$, and $\gamma$ are positive reals larger than one, $j_{32}$ and $j_{42}$ are positive, and $j_{21}$ is much smaller than the other elements (for analytical expressions of the partial derivatives refer to \ref{app-deriv}).
This matrix has two zero eigenvalues with eigenvectors that correspond to the directions of  $\delta_{4}$ and $\delta_{5}$; and two non-zero eigenvalues given by,
\begin{eqnarray}
\lambda_{1}=\alpha=
\left[-1+\left[1+H_{\epsilon}^{'}(\tau)\right]H_{\epsilon}^{'}(H_{\epsilon}(\tau))\right]H_{2\epsilon}^{'}(\tau+H_{2\epsilon}(\tau))\\
\lambda_{2} =\beta= \left[-1+2H_{\epsilon}^{'}(\tau)\right]H_{\epsilon}^{'}(H_{\epsilon}(\tau))H_{2\epsilon}^{'}(\tau+H_{2\epsilon}(\tau)),
\end{eqnarray}
where
\begin{equation}
H_{\epsilon}^{'}(\phi)=\frac{\partial}{\partial \phi}U^{-1}\left(U(\phi)+\epsilon\right).
\end{equation}

Here $\lambda_{1}$ and $\lambda_{2}$ are larger than one (see Lemma \ref{lemma1}), noticing that all terms are due to sub-threshold events. Thus a perturbation can effectively disturb the system in two different possible directions, showing that the cluster $S_{2}$ is stable and the $S_{3}$ is unstable.

\begin{lem}\label{lemma1}
If $H_{\epsilon}(\phi)$ given by (\ref{eq2}) mediates a sub-threshold reception event and $\epsilon>0$, $U^{'}(\phi)>0$, and $U^{''}(\phi)<0$,  then $H_{\epsilon}^{'}(\phi)>1$.
\end{lem}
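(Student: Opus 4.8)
The plan is to compute $H_{\epsilon}'(\phi)$ explicitly and then read off the inequality from concavity. First I would apply the chain rule together with the inverse function theorem to (\ref{eq2}): writing $\psi := H_{\epsilon}(\phi) = U^{-1}(U(\phi)+\epsilon)$, we have $U(\psi) = U(\phi)+\epsilon$, and differentiating in $\phi$ gives $U'(\psi)\,\psi' = U'(\phi)$, hence
\begin{equation}\label{Hprime-formula}
H_{\epsilon}'(\phi) = \frac{U'(\phi)}{U'\!\left(H_{\epsilon}(\phi)\right)}.
\end{equation}
This step uses that the event is sub-threshold, i.e. $U(\phi)+\epsilon < U(1) = 1$, so that the argument of $U^{-1}$ lies in the range $(0,1)$ of $U$ and, since $U' > 0$, the inverse is differentiable there; in particular $H_{\epsilon}(\phi) \in (0,1)$ and the denominator in (\ref{Hprime-formula}) is well-defined and positive.

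Next I would establish that the pulse advances the phase, $H_{\epsilon}(\phi) > \phi$. This is immediate: $U(H_{\epsilon}(\phi)) = U(\phi) + \epsilon > U(\phi)$ because $\epsilon > 0$, and since $U$ is strictly increasing this forces $H_{\epsilon}(\phi) > \phi$.

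Finally I would invoke concavity. Because $U'' < 0$, the derivative $U'$ is strictly decreasing on $[0,1]$; combined with $H_{\epsilon}(\phi) > \phi$ this yields $U'(H_{\epsilon}(\phi)) < U'(\phi)$, with both quantities strictly positive. Dividing, the right-hand side of (\ref{Hprime-formula}) exceeds $1$, i.e. $H_{\epsilon}'(\phi) > 1$, which is the claim.

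There is essentially no serious obstacle here; the argument is a short chain of elementary observations. The only point requiring a little care is the use of the sub-threshold hypothesis to guarantee that $H_{\epsilon}(\phi)$ is a legitimate, differentiable point of the phase interval (so that formula (\ref{Hprime-formula}) is valid and $U'$ is evaluated inside $(0,1)$ where the stated monotonicity and sign conditions on $U$ apply); without it the reception event would instead trigger a reset and the map would not be given by (\ref{eq2}).
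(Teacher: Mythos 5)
Your proof is correct and follows essentially the same route as the paper's: compute $H_{\epsilon}^{'}(\phi)=U^{'}(\phi)/U^{'}\left(H_{\epsilon}(\phi)\right)$ by differentiating the inverse, note $H_{\epsilon}(\phi)>\phi$, and conclude from the monotone decrease of $U^{'}$. You merely make explicit two points the paper leaves implicit (why $H_{\epsilon}(\phi)>\phi$ and where the sub-threshold hypothesis enters), which is a welcome but inessential refinement.
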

\begin{proof}
Assume $\epsilon>0$. By definition 
\begin{equation*}
H_{\epsilon}^{'}(\phi)=\frac{\partial}{\partial \phi}U^{-1}\left(U(\phi)+\epsilon\right) = \frac{U^{'}(\phi)}{U^{'}\left(U^{-1}(U(\phi)+\epsilon)\right)} = \frac{U^{'}(\phi)}{U^{'}\left(H_{\epsilon}(\phi)\right)},
\end{equation*}
Since $U^{'}(\phi)$ is a monotonic decreasing function and $H_{\epsilon}(\phi)>\phi$ we have $U^{'}(\phi)>U^{'}(H_{\epsilon}(\phi))$ for any $H_{\epsilon}(\phi)$, and consequently $H_{\epsilon}^{'}(\phi)>1$.
\end{proof}

Now we describe the long-term effect of a single oscillator perturbation to the unstable cluster $S_{3}$. A negative perturbation to one of the elements on the unstable cluster ($ \phi^{+} = \left(a,a,a-\delta_{3},b,b\right)$) puts one of its elements phase slightly behind; then the initial stable cluster $S_{2}$ begins to receive an additional pulse just after it is reset, increasing its relative phase in each cycle, and thus approaching the phase of the elements in the originally stable cluster. After some cycles it finally joins that cluster by a simultaneous reset, forming a new $S_{3} \times S_{2}$ clustered state. This switching process is illustrated just after the second and fourth perturbations in figure \ref{fig3}. The final state has the same symmetry as the initial state, but has different stability properties: Whereas the orbit is stable to splitting the $S_3$ cluster, it is unstable to splitting the $S_2$ cluster and upon perturbation resynchronizes and shifts in phase with respect to the cluster $S_3$ (see table \ref{ut-s3s2-2}). A further perturbation to the cluster $S_{2}$ does not change the elements of each cluster but just returns the system to the initial phase difference, as illustrated in the first, third and fifth perturbations in figure \ref{fig3}. 
\begin{figure}[b]
\begin{center}
\includegraphics[width=11.0cm,clip,angle=0]{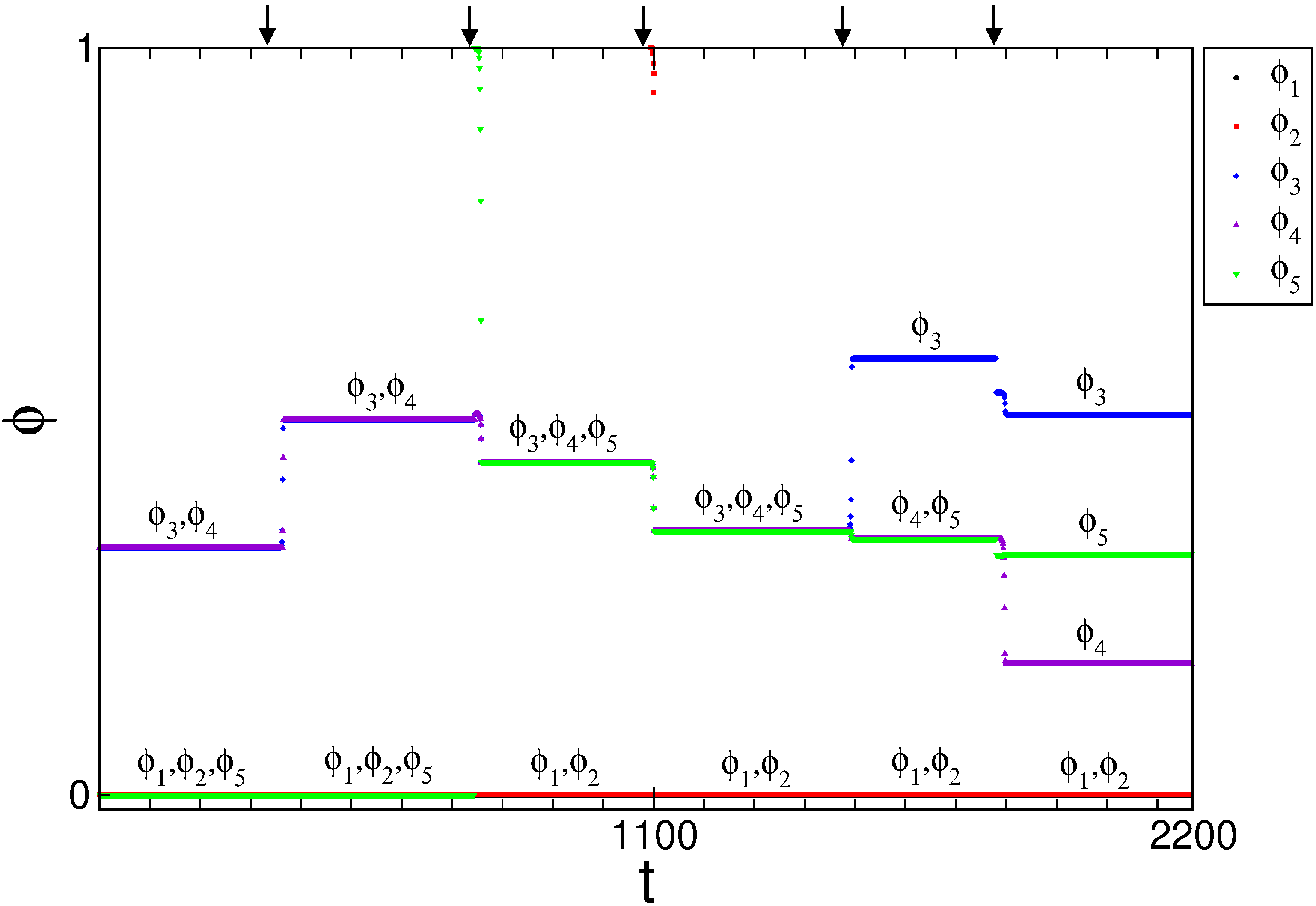}
\end{center}
\caption[1st-entry]{ \footnotesize Example of symmetry change by perturbation-induced switching. A sequence of three negative and two positive single oscillator perturbations (same parameters as in figure \ref{fig3}). Showing that the positive perturbations splits the unstable cluster until the system reach some periodic orbit. The symmetry is not preserved.}
\label{fig4}
\end{figure}

Intriguingly, positive perturbations $\bm{\phi}^{+} = \left(a,a,a+\delta_{3},b,b\right)$) result in a completely different dynamic, as can be seen in figure \ref{fig4} which presents a sequence of three negative and two positive perturbations. A positive perturbation puts just one oscillator from the unstable cluster ahead, that now increases its phase in relation to its original cluster in each cycle till it begins to be reset by pulses coming from the originally stable cluster. The original $S_{2}$ cluster changes its phase to conform with this new pulse configuration, but still been reset by pulses coming from the two elements left on the unstable cluster. Thus the $S_{3}$ cluster splits into two clusters, and the new configuration becomes $S_{2} \times S_{2} \times S_{1}$. A further perturbation puts the system in a stable cyclic state.

Hence the symmetry $S_{3} \times S_{2}$ is not preserved upon a general perturbation. however, simulations suggest that if we only consider negative single oscillator perturbations, or one negative perturbation with a larger magnitude than the others, the symmetry is preserved, and it is possible to write a transition rule,
\begin{equation}
 (a-\delta_{1},a,a,b,b) \rightarrow (a,b,b,a,a),
\end{equation}
that permits us to know which will be the next state after one perturbation

\subsection{Clustered state $S_{4} \times S_{1}$}
Considering now the $S_{4} \times S_{1}$ symmetry, assuming $\delta_{2}<\delta_{3}<\delta_{4}$ and $\delta_{5} > 0$, in an procedure analogous to that in the last section, we obtain the following Jacobian matrix (see last row of tables \ref{pt-s4s1} and \ref{ut-s4s1}):
\begin{equation} \label{J-s4s1}
{\frac{\partial\bm{\delta}(n+1)}{\partial\bm{\delta}(n)} \bigg{|}}_{\bm{\delta}(n)=0} = \left( \begin{array}{cccc}
\alpha & 0 & 0 & 0 \\
j_{21} & \beta & 0 & 0 \\
j_{31} & j_{32} & \gamma & 0 \\ 
j_{41} & j_{42} & \theta & 0 \\
\end{array} \right) ,
\end{equation}
here $\alpha$, $\beta$, $\gamma$, and $\theta$ are larger than one, $j_{41}$, $j_{42}$ are positive, and $j_{21}$, $j_{31}$, and $j_{32}$ are much smaller than the other elements (see \ref{app-deriv}). This matrix has one zero eigenvalue, corresponding to the single element represented by $S_{1}$ and three non-zero eigenvalues given by
\begin{flalign}
\lambda_{1}&=\alpha =\left[-1+\left[1+H_{\epsilon}^{'}(\tau)H_{\epsilon}^{'}\left(H_{\epsilon}(\tau)\right) \right]H_{\epsilon}^{'}\left(H_{2\epsilon}(\tau)\right)\right]H_{\epsilon}^{'}\left(\tau+H_{3\epsilon}(\tau)\right), \\
\lambda_{2}&= \beta=\left[-1+\left[1+H_{\epsilon}^{'}(\tau)\right]H_{\epsilon}^{'}\left(H_{\epsilon}(\tau)\right) \right]H_{\epsilon}^{'}\left(H_{2\epsilon}(\tau)\right)H_{\epsilon}^{'}\left(\tau+H_{3\epsilon}(\tau)\right), \\
\lambda_{3}&=\gamma=\left[-1+2H_{\epsilon}^{'}(\tau)\right]H_{\epsilon}^{'}\left(H_{\epsilon}(\tau)\right)H_{\epsilon}^{'}\left(H_{2\epsilon}(\tau)\right)H_{\epsilon}^{'}\left(\tau+H_{3\epsilon}(\tau)\right).
\end{flalign}
Using the same argument as in the last section (see lemma \ref{lemma1}), these three eigenvalues are necessarily larger than one ($\lambda_{1},\lambda_{2},\lambda_{3}>1$), showing the instability of the cluster $S_{4}$.

Again we study the effect of single oscillator perturbations. As can be seen in figure \ref{fig5}, positive perturbations to an element of the unstable cluster $S_{4}$ put one element from the unstable group ahead. This difference in phase will increase each cycle, since pulses received at larger phases more strongly shift the oscillator's phase; at the same time the $S_{1}$ element is  not only reset by the other oscillators pulses, but also receives additional pulses that makes its phase approach the unstable cluster. After some cycles, the element $S_{1}$ joins the original unstable cluster, forming a new $S_{4}$ cluster while the perturbed oscillator forms the new $S_{1}$, returning the system to its original phase difference and symmetry, but with different elements composing the clusters.

\begin{figure}[]
\begin{center}
\includegraphics[width=11.0cm,clip,angle=0]{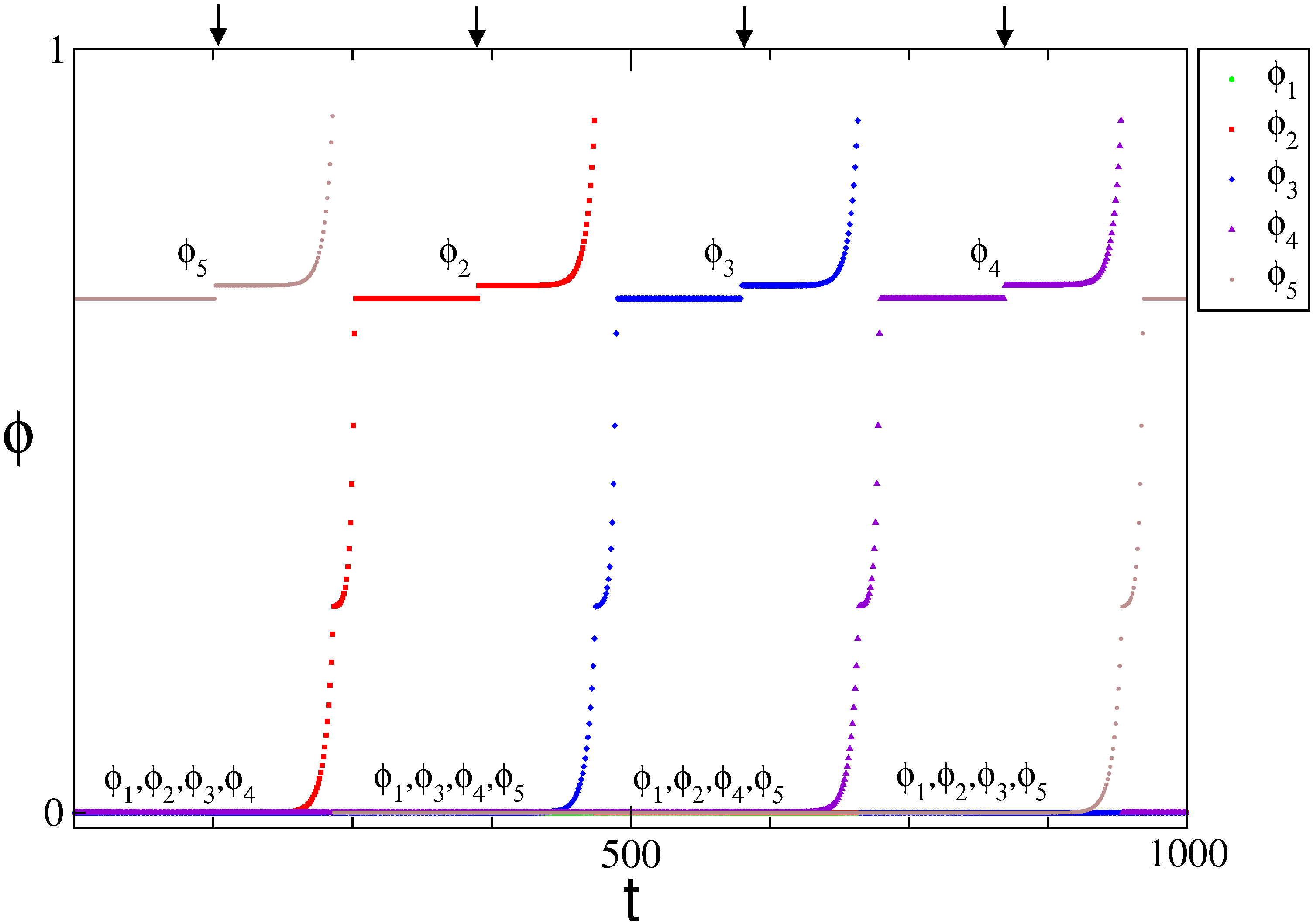}
\end{center}
\caption[1st-entry]{ \footnotesize Example of perturbation-induced switching in a $S_{4} \times S_{1}$ state set. A sequence of four positive single oscillator perturbations preserving a $S_{4} \times S_{1}$ clustered state ($\tau=0.27$, $\epsilon=0.015$, $I=1.1$, $\gamma=1$). The phase of all oscillators are plotted each time oscillator 1 is reset, each color represents the phase of one oscillator. The perturbed oscillator leaves the cluster $S_{4}$ and replaced the $S_{1}$ oscillator, that join the cluster $S_{4}$, preserving the symmetry. The sequence of states corresponding to the plateaus are $(a,a,a,a,b) \rightarrow (a,b,a,a,a)\rightarrow (a,a,b,a,a) \rightarrow (a,a,a,b,a)$.}
\label{fig5}
\end{figure}
 
\begin{figure}[]
\begin{center}
\includegraphics[width=11.0cm,clip,angle=0]{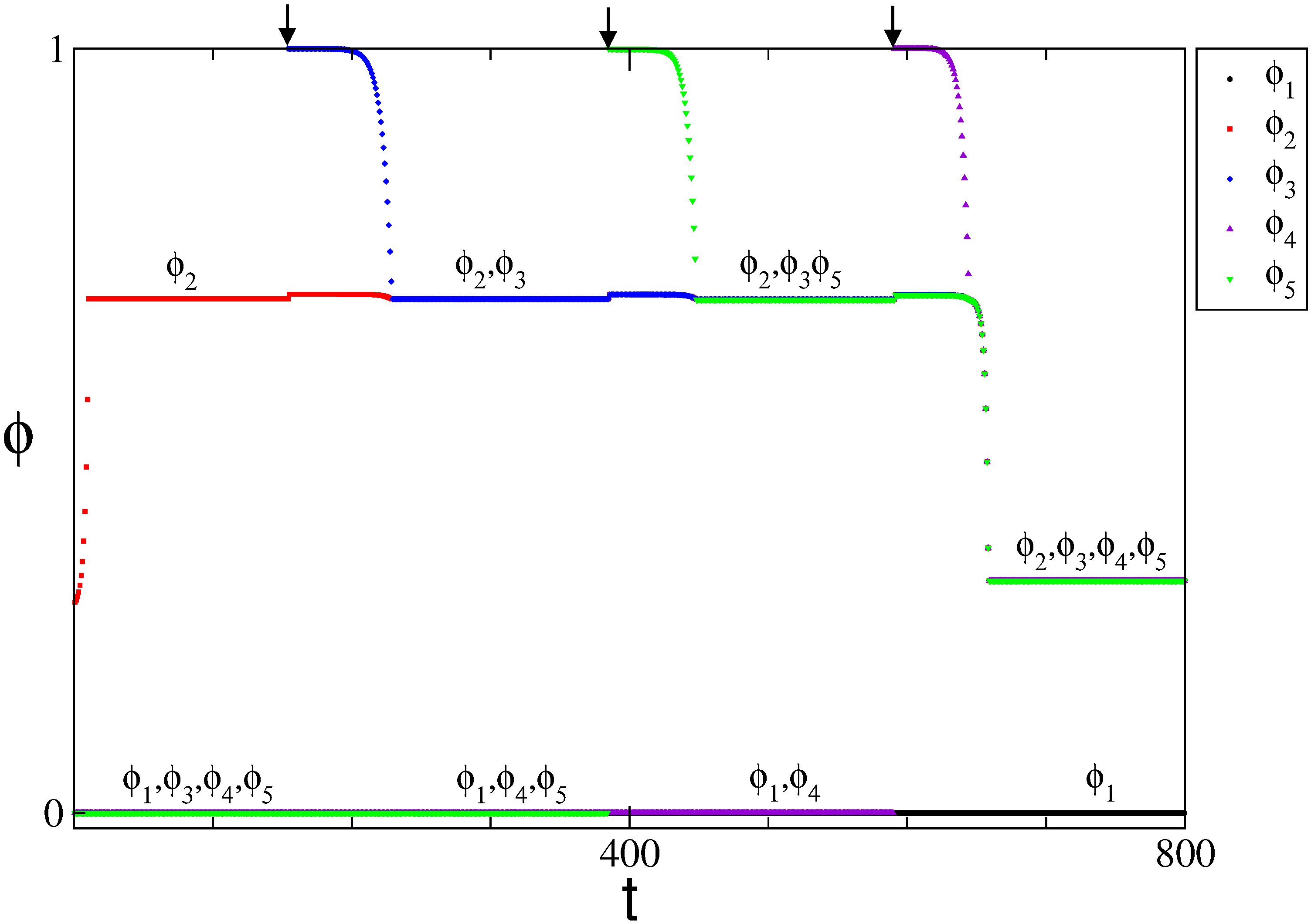}
\end{center}
\caption[1st-entry]{ \footnotesize  Example of symmetry change by perturbation-induced switching. A sequence of three negative single oscillator perturbations for the same parameter used on figure \ref{fig5}. The perturbed oscillator joins the $S_{1}$ oscillator, forming a $S_{2}$ cluster, new perturbations led first to an $S_{3} \times S_{2}$ configuration and later to split the clusters reaching a stable attractor. The symmetry is not preserved.}
\label{fig6}
\end{figure}

When a negative perturbation is applied (see figure \ref{fig6}) the element perturbed is put backwards, and as before the elements ahead increase the difference in phase in relation to the perturbed element; at each cycle, after being reset, the original $S{1}$ element receive an additional pulse coming from the perturbed element, increasing its phase. After some cycles this increase makes the  $S_{1}$ element to join the perturbed element, forming a new cluster $S_{2}$. The new configuration becomes $S_{3} \times S_{2}$, where $S_{3}$  is unstable. A second perturbation to the $S_{3}$ cluster moves the perturbed element to the cluster $S_{2}$. A last perturbation can either put the system in its initial configuration or split the cluster into two, depending on the position on the periodic orbit it is applied. The symmetry of this state is obviously not preserved for negative perturbations, and computer simulations indicate that more general perturbations bring an even more complicated switching dynamic due the large number of elements in the unstable cluster.

Considering only single positive perturbations, we can state a transition rule between two states when subject to a single positive perturbation,
\begin{equation}
 (a+\delta_{1},a,a,a,b) \rightarrow (b,a,a,a,a).
\end{equation}
Under these considerations, the resulting transition diagram is a fully connected one, and it is possible to jump from one equivalent permutation state to any other one applying only one perturbation.

\subsection{Clustered state $S_{2} \times S_{2} \times S_{1}$}
For the symmetry $S_{2} \times S_{2} \times S_{1}$, assuming $\delta_{3}<\delta_{4}$ and $\delta_{2},\delta_{5} > 0$, we have the following Jacobian matrix:
\begin{equation} \label{J-s2s2s1}
{\frac{\partial\bm{\delta}(n+1)}{\partial\bm{\delta}(n)} \bigg{|}}_{\bm{\delta}(n)=0} = \left( \begin{array}{cccc}
\alpha & 0 & 0 & 0 \\
\beta & 0 & 0 & 0 \\
\beta & 0 & 0 & 0 \\ 
\gamma & 0 & 0 & 0 \\
\end{array} \right),
\end{equation}
where $\alpha > \beta > \gamma > 0$ (see \ref{app-deriv}), cf tables \ref{pt-s2s2s1} and \ref{ut-s4s1}. This matrix has three zero eigenvalues and only one non-zero eigenvalue given by
\begin{flalign}
\lambda_{1}&= \alpha = H_{\epsilon}^{'}\left(\tau^{'}+H_{\epsilon}\left(\tau-\tau^{'}+H_{2\epsilon}(\tau^{'})\right)\right)\left[-1+H_{\epsilon}^{'}\left(\tau-\tau^{'}+H_{2\epsilon}(\tau^{'})\right)\left[1+H_{2\epsilon}^{'}(\tau^{'})\right]\right],
\end{flalign}
which is larger than one accordingly to Lemma \ref{lemma1}. The fact that there is only one eigenvalue and that it is larger than one not only shows that there is only one unstable cluster, but also that perturbations change only the difference in phase between the two elements on this cluster. As a result, any general perturbation can be mapped to a single oscillator perturbation.

Differently from the last two considered symmetries, we here have one single element, one stable $S_{2}$ clusters, and one symmetric unstable $S_{2}$ cluster. When perturbed, the initial unstable cluster $S_{2}$ splits into two, the additional pulse received now by the initial single $S_{1}$ element just after its reset makes it approaches the element that was put behind on the unstable $S_{2}$ cluster, forming a new stable $S_{2}$ cluster. This occurs because it is reset by supra-threshold pulses. Moreover the element ahead begins to be reset by pulses and stops increasing its phase, becoming stable and the original stable $S_{2}$ cluster after changing its phase, is not reset by pulses anymore, becoming unstable. The final state has the same symmetry and stability properties as the former state.

\begin{figure}[]
\begin{center}
\includegraphics[width=11.0 cm,clip,angle=0]{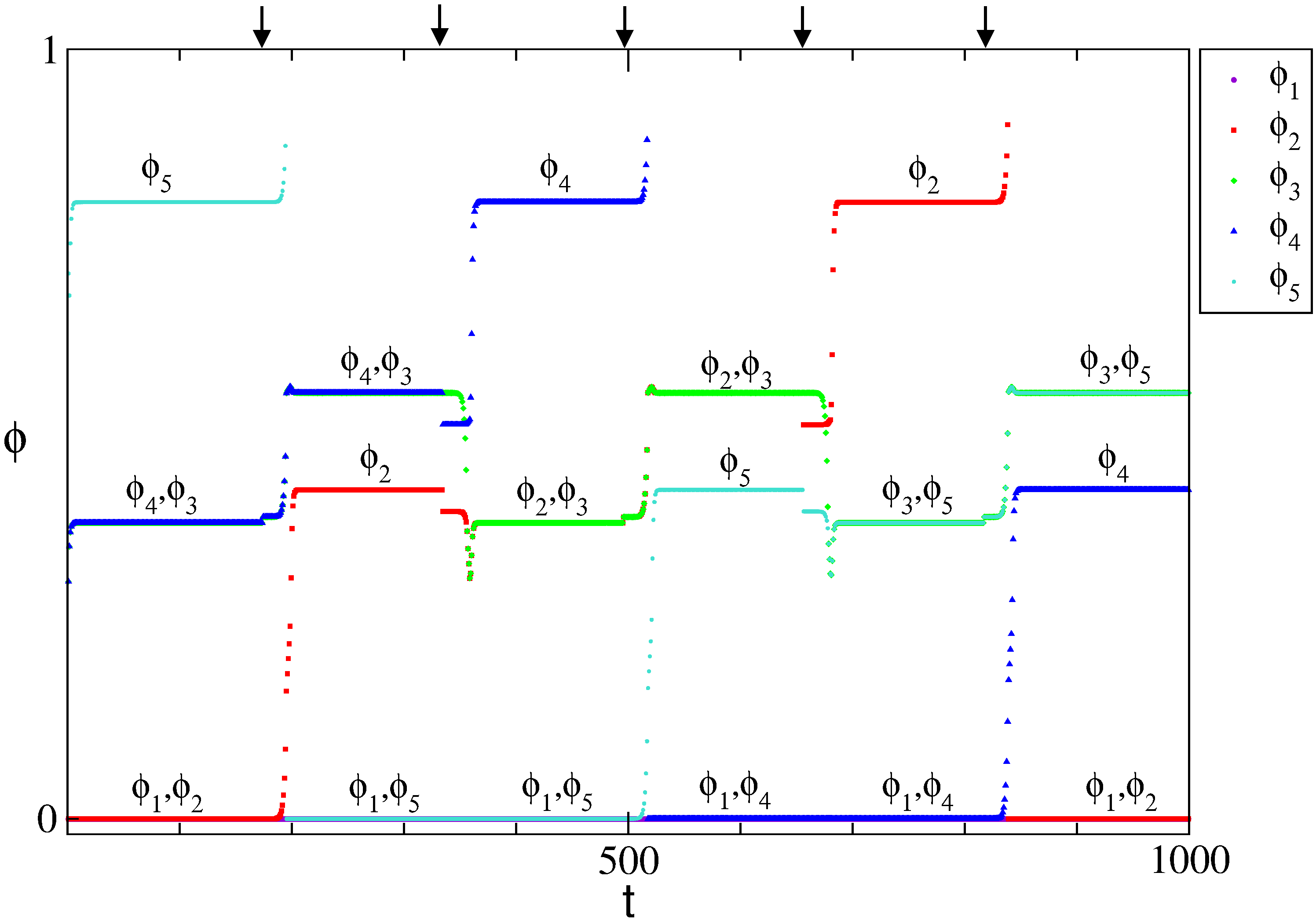}
\end{center}
\caption[1st-entry]{ \footnotesize Example of perturbation-induced switching in a $S_{2} \times S_{2} \times S_{1}$ state set. A sequence of five perturbations driving the system through different states with symmetry $S_{2} \times S_{2} \times S_{1}$ ($\tau=0.49$,$\epsilon=0.025$,$I=1.04$,$\gamma=1$). The phase of all oscillators are plotted at the moment when oscillator number one is reset, each color representing the phase of one oscillator. The apparent change of the phase differences among the clusters just after the perturbations depends on the cluster to which the reference oscillator belongs.  The symmetry of the unstable attractors is preserved. The sequence of states corresponding to the plateaus are $(a,a,b,b,c) \rightarrow (b,c,a,a,b)\rightarrow (a,b,b,c,a) \rightarrow (b,a,a,b,c) \rightarrow (a,c,b,a,b) \rightarrow (b,b,a,c,a).$}
\label{fig7}
\end{figure}

The preservation of the symmetry implies a closed transition diagram among all the possible $S_{2} \times S_{2} \times S_{1}$ states (see figure \ref{diagram-s2s2s1}). We state two simple equivalent switching rules. Considering first a positive representation we have
\begin{equation}\label{s2s2s1-rule1}
 (a,a+\delta_{2},b,b,c) \rightarrow (c,b,a,a,b)
\end{equation}
that can be rewritten for negative perturbations simply as
\begin{equation}\label{s2s2s1-rule2}
 (a-\delta_{1},a,b,b,c) \rightarrow (c,b,a,a,b).
\end{equation}

\begin{figure}[]
\begin{center}
\includegraphics [width=12.0cm,clip,angle=0]{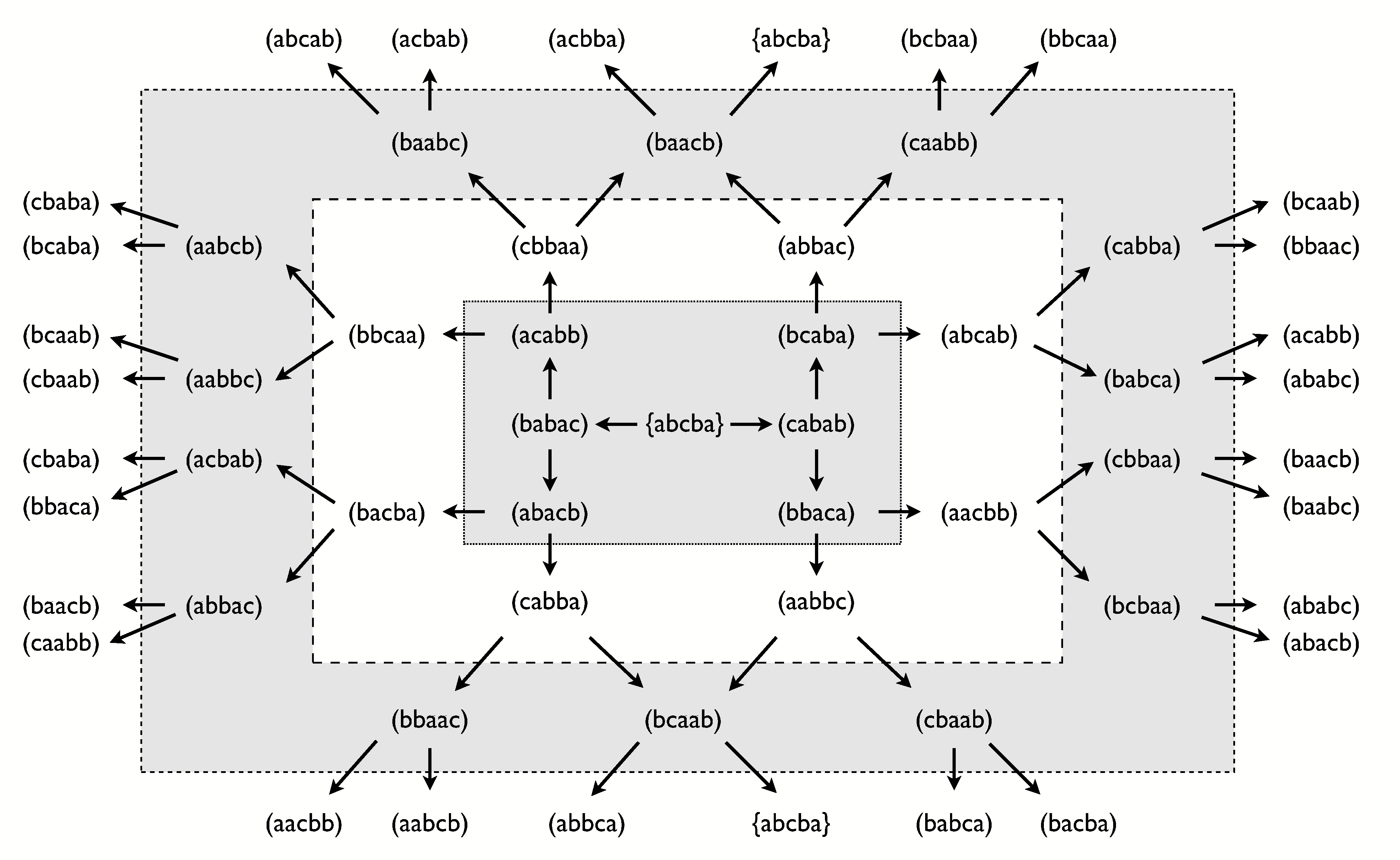}
\end{center}
\caption{Five steps state transition diagram for the symmetry $S_{2} \times S_{2} \times S_{1}$. This diagram shows all the possible 5 steps paths, sequence of states, beginning on the state \{a,b,c,b,a\}. Each arrow correspond to one of the two possible perturbations, subject to (\ref{s2s2s1-rule1}) and (\ref{s2s2s1-rule2}). It is necessary at least 5 perturbations to reach the initial state again.
}
\label{diagram-s2s2s1}
\end{figure}

We conclude that for this symmetry the unstable attractors are linked to form a heteroclinic network (figure \ref{diagram-s2s2s1}), characterized by (\ref{s2s2s1-rule1}) and (\ref{s2s2s1-rule2}), forming a closed set of saddle periodic orbits among which the systems switches in a controlled way upon small external perturbations. We remark that in the absence of noise this dynamic does not exhibit spontaneous transitions between nearby saddle states \cite{BES2008-2,CT2008} but instead displays convergence to unstable attractors. The free dynamic of each element evolves continuously up to reset; still, the collective dynamic of the entire network (network state) evolves continuously almost always, but interrupted by discrete jumps due to the infinitely fast phase response of the interaction. As (i) the transitions are fully controlled by external perturbations and thus predictable, and (ii) the symmetry is preserved when the network is subject to sufficiently small, general perturbations, arbitrarily small external noise would trigger a persistent switching dynamic in which the network states are constrained to the closed set of periodic orbits with initial $S_2\times S_2\times S_1$ symmetry. A numerical example of this spontaneous switching phenomenon has been reported before \cite{TWG2003} for a system of $N=100$ oscillators exhibiting $S_{21}\times S_{21} \times S_{21}\times S_{21} \times S_{16}$ symmetry.

\section{Discussion}
In the networks of pulse-coupled oscillators studied above, three sets of heteroclinically connected unstable attractors appear to have a well-defined symmetry that depends on the network parameters. Interestingly, for two state symmetries, the possible switching transitions markedly deviate from those in time-continuously coupled systems \cite{G2007}. Moreover, all attractors with the third symmetry $S_{2} \times S_{2} \times S_{1}$ form one closed heteroclinic network, where all possible transitions are predictable and depend on the precise direction of the perturbation. In fact, mapping an arbitrary small perturbation to a single oscillator perturbation, we derived a general set of transition rules, (\ref{s2s2s1-rule1}) and (\ref{s2s2s1-rule2}). This last feature guarantees that there are no changes of symmetry during the switching and precisely defines a transition diagram (figure \ref{diagram-s2s2s1}) that holds for all sufficiently small perturbations.

Thus, this work explicitly shows how nontrivial switching dynamics is induced and precisely controlled by perturbations in pulse-coupled systems. Our analysis shows that and how event sequences, collectively generated by the network, fully determine switching transitions in pulse-coupled oscillators. As a consequence, these results are not restricted to the IF model (used here for numerical simulations and illustrating purposes) but equally hold for different oscillator models with sub-threshold potential dynamics that are sufficiently close to the one considered here. Moreover, the phenomenon should still hold qualitatively for temporally extended responses as long as the post-synaptic response times are  short compared to the membrane time constant and inter-spike-interval times. Nevertheless, although we expect the same transition possibilities, the dynamics without noise will show heteroclinic switching sequences that depend on the initial condition and do not require external perturbations, cf \cite{CT2008}. As stability and instability of clusters reflect synchronizing and desynchronizing mechanisms \cite{KGT2009}, here realized by supra- and sub-threshold inputs, respectively, similar switching features also occur in networks of $N \geq 6$ pulse-coupled oscillators \cite{TWG2003}.

Since the systems studied here are pulse-coupled and of hybrid type, with smooth time evolution interrupted at discrete times of interactions, it is interesting to compare our results to those on systems of oscillators coupled continuously in time \cite{AB2004,AB2005}. The latter systems exhibit partially synchronized saddle states with the same symmetry $S_{2} \times S_{2} \times S_{1}$, where a persistent switching dynamics appears as one feature of the model and, when subject to asymmetric external currents, generates a wide variety of spatiotemporal patterns. Interestingly, the transitions rules given by (\ref{s2s2s1-rule1}) and (\ref{s2s2s1-rule2}) (and illustrated on figures \ref{fig7} and \ref{diagram-s2s2s1}) not only guarantee an equivalent persistent switching dynamic when subject to noise, but also imply as well the existence of the same set of spatiotemporal patterns when subject to asymmetric external currents. Thus, our model characterizes exactly this switching dynamic in a pulse-coupled neuronal framework, where the 
patterns can be described as distributed pulse-sequences (spike patterns). The importance of such a spiking representation becomes evident in particular when considering potential applications to neural coding and information processing \cite{AB2005}. For instance, studies on the olfactory system of insects \cite{ML2005,L2002} have shown that biological systems could use spatiotemporal spike patterns as part of their information processing. In particular our results agree with the interesting predictions of Hansel et al \cite{H1993}, Rabinovich et al \cite{R2001}, and Timme et al \cite{TWG2002} regarding the generation of spatiotemporal spike patterns based on a switching dynamics. In addition our work presents a neural system where the entire (long-time) switching dynamics follows from a fixed set of transition rules, a promising feature that may prove not only advantageous for computation in biological but also in artificial systems.

We remark that although the apparent equivalence between the dynamics of pulse-coupled oscillators and continuously coupled oscillators works for this specific symmetry, it does not holds as a general rule. The most pronounced counter-examples are systems with symmetry $S_{3} \times S_{2}$, which when smoothly coupled exhibit persistent switching dynamics, but when pulse-coupled, any continuous small noise required for the switching necessarily drives the system to a stable attractor, cf figure \ref{fig4}.

To understand how these switching properties may actually perform computational tasks, a complete analysis of the effect of asymmetric currents, driving pulses and asymmetric connections on the spike patterns is needed. The answer to these controlling factors could bring us important information about alternative mechanisms of neural computation, both biological and artificial.

\begin{acknowledgments}
We thank Frank van Bussel for critically checking the English in this manuscript. We thank the Federal Ministry of Education and Research (BMBF) Germany for support under grant number 01GQ0430 to the Bernstein Center for Computational Neuroscience (BCCN) G\"{o}ttingen.
\end{acknowledgments}

\appendix

\section{Partial derivatives}\label{app-deriv}
Here we present the analytical expressions for the Jacobian matrices presented on section \ref{sec-PDY}. Where we introduce a short notation $H_{y}(x) \rightarrow x_{y}$, for $x \in \left\{0,\tau,(\tau-\tau^{'}),(\tau-\tau^{'}+\tau_{y}^{'})\right\}$ and $y \in \{\epsilon,2\epsilon,3\epsilon\}$

\subsection{$S3\times S2$: non-zero elements for the Jacobian matrix (\ref{J-s3s2})}
\begin{align}
\alpha &=
\left[-1+\left[1+H_{\epsilon}^{'}(\tau)\right]H_{\epsilon}^{'}(\tau_{\epsilon})\right]H_{2\epsilon}^{'}(\tau+\tau_{2\epsilon})\\
\beta &= \left[-1+2H_{\epsilon}^{'}(\tau)\right]H_{\epsilon}^{'}(\tau_{\epsilon})H_{2\epsilon}^{'}(\tau+\tau_{2\epsilon})\\
\gamma &= \left[-1+H_{\epsilon}^{'}(0)\right]H_{2\epsilon}^{'}(\tau+0_{\epsilon})+H_{\epsilon}^{'}\left(\tau_{\epsilon}\right)H_{2\epsilon}^{'}\left(\tau+\tau_{2\epsilon}\right)\\
j_{21} &=-\left[1+\left[-2+H_{\epsilon}^{'}(\tau)\right]H_{\epsilon}^{'}(\tau_{\epsilon})\right]H_{2\epsilon}^{'}(\tau+\tau_{2\epsilon})\\
j_{32} &=j_{42}=\left[-1+H_{\epsilon}^{'}(\tau)\right]H_{\epsilon}^{'}(\tau_{\epsilon})H_{2\epsilon}^{'}(\tau+\tau_{2\epsilon})
\end{align}

\subsection{$S4\times S1$: non-zero elements for the Jacobian matrix (\ref{J-s4s1})} 
\begin{align}
\alpha &=\left[-1+\left[1+H_{\epsilon}^{'}(\tau)H_{\epsilon}^{'}\left(\tau_{\epsilon}\right) \right]H_{\epsilon}^{'}\left(\tau_{2\epsilon}\right)\right]H_{\epsilon}^{'}\left(\tau+\tau_{3\epsilon}\right)\\
\beta &=\left[-1+\left[1+H_{\epsilon}^{'}(\tau)\right]H_{\epsilon}^{'}\left(\tau_{\epsilon}\right) \right]H_{\epsilon}^{'}\left(\tau_{2\epsilon}\right)H_{\epsilon}^{'}\left(\tau+\tau_{3\epsilon}\right) \\
\gamma &=\left[-1+2H_{\epsilon}^{'}(\tau)\right]H_{\epsilon}^{'}\left(\tau_{\epsilon}\right)H_{\epsilon}^{'}\left(\tau_{2\epsilon}\right)H_{\epsilon}^{'}\left(\tau+\tau_{3\epsilon}\right) \\
j_{31} &=j_{21}=-\left[1+\left[-2+H_{\epsilon}^{'}(\tau_{\epsilon})\right]H_{\epsilon}^{'}(\tau_{2\epsilon})\right]H_{\epsilon}^{'}(\tau+\tau_{3\epsilon}) \\
j_{41} &=-\left[-1+H_{\epsilon}^{'}(0_{\epsilon})\right]H_{\epsilon}^{'}(0_{2\epsilon})+\left[-1+H_{\epsilon}^{'}(\tau_{\epsilon})\right]H_{\epsilon}^{'}(\tau+\tau_{\epsilon})
\end{align}
\begin{flalign}
j_{42} &=-\left[-1+H_{\epsilon}^{'}(0)\right]H_{\epsilon}^{'}(0_{\epsilon})H_{\epsilon}^{'}(0_{2\epsilon})+\left[-1+H_{\epsilon}^{'}(\tau_{\epsilon})\right]H_{\epsilon}^{'}(\tau_{2\epsilon})H_{\epsilon}^{'}(\tau+\tau_{3\epsilon}) \\
\theta &=-1+H_{\epsilon}^{'}(0)H_{\epsilon}^{'}(0_{\epsilon})H_{\epsilon}^{'}(0_{2\epsilon})+\left[1+\left[-1+H_{\epsilon}^{'}(\tau)\right]H_{\epsilon}^{'}(\tau_{\epsilon})H_{\epsilon}^{'}(\tau_{2\epsilon})\right]H_{\epsilon}^{'}(\tau+\tau_{3\epsilon})
\end{flalign}

\subsection{$S2 \times S2 \times S1$: non-zero elements for the Jacobian matrix (\ref{J-s2s2s1}).}
\begin{flalign}
\lambda_{1} &= \alpha = H_{\epsilon}^{'}\left(\tau^{'}+\left(\tau-\tau^{'}+\tau^{'}_{2\epsilon}\right)_{\epsilon}\right)\left[-1+H_{\epsilon}^{'}\left(\tau-\tau^{'}+\tau^{'}_{2\epsilon}\right)\left[1+H_{2\epsilon}^{'}(\tau^{'})\right]\right], \\
\beta &= H_{\epsilon}^{'}(0)H_{\epsilon}^{'}(\tau^{'}+0_{\epsilon})+H_{\epsilon}^{'}\left(\tau^{'}+(\tau-\tau^{'}+\tau_{2\epsilon})_{\epsilon}\right)\left[-1+H_{\epsilon}^{'}(\tau-\tau^{'}+\tau_{2\epsilon}^{'})\right]\\
 \gamma &= -\left[-1+H_{\epsilon}^{'}(\tau-\tau^{'})\right]H_{\epsilon}^{'}\left((\tau-\tau^{'})_{\epsilon}\right)+H_{\epsilon}^{'}\left(\tau^{'}+(\tau-\tau^{'}+\tau^{'}_{\epsilon})_{\epsilon}\right)\left[-1+H_{\epsilon}^{'}(\tau-\tau^{'}+\tau_{2\epsilon}^{'})\right]
\end{flalign}

\section{Return Maps}\label{app-rMaps}
Here we explain step by step the periodic orbit dynamic described by the unperturbed return maps that define the three main families of attractors, given in tables \ref{ut-s3s2-1}, \ref{ut-s3s2-2}, \ref{ut-s4s1} and \ref{ut-s2s2s1}. The event notation is the following: $s_{i}$ indicates that oscillator $i$ sent a pulse; $r_{i}$ indicates that pulses were received coming from the oscillators indicated by $i$. Capital letters indicate constants,  $H_{\epsilon}(\phi)$ is the transfer function presented on section 1, and $p_{i,j}$ is a short notation for the phase of oscillator $i$ at event number $j$.

A realization of the dynamics described by these tables are presented for specific parameter in tables \ref{dt-s3s2}, \ref{dt-s4s1} and \ref{dt-s2s2s1}.  As there are no approximation to the corresponding analytical condition tables, the specific values completely agree with the iterated simulation in figures \ref{fig3}, \ref{fig5} and \ref{fig7}.

\subsection*{Unperturbed $S_{3} \times S_{2}$ dynamic }
The initial condition is such that no pulse was sent before time zero. At time zero, the first event, oscillators 1, 2 and 3 fire ($s_{123}$); the second event is the reception of these signals a time $\tau$ later ($r_{123}$), these is an supra-threshold event to oscillators 4 and 5, which then send a signal ($s_{45}$) and are reset; the third event is the reception of pulses from 4 and 5 ($r_{45}$); and the last event is the reset of oscillators 1, 2 and 3 ($s_{123}$) by reaching the threshold. For any choice of the parameters resulting in $A=p_{4,2}+1-p_{1,2}$ (while preserving the event sequence), we have a period-one attractor, since the initial state is obtained after one pulse of each oscillator. A numerical example of such a structure is presented in table \ref{dt-s3s2}. From this map, we can conclude that the cluster $S_{2}$ is stable, since any small variation will be restored when its elements are reset together by the incoming pulse.

\begin{table}[ht]
\caption{Analytic table of condition for an unperturbed $S_{3}\times S_{2}$ dynamic, $S_{3}$ unstable.}
\label{ut-s3s2-1}
\footnotesize
\begin{itemize}[leftmargin=.0in]
\item[]\begin{tabular}{l  l  l  l  l }
\hline \hline
event & time &  $\phi_{1},\phi_{2},\phi_{3}$  &  $\phi_{4},\phi_{5}$  & event num. \\
\hline
$s_{1,2,3}$ & 0 & 0 & A & 0 \\
$r_{1,2,3}; s_{4,5}$ & $\tau$ & $H_{2\epsilon}\left(\tau\right) = p_{1,1}$ & $H_{3\epsilon}\left(A+\tau\right)>1 \rightarrow 0$ & 1 \\
$r_{4,5}$ & $2 \tau$ & $H_{2 \epsilon}\left(p_{1,1}+\tau\right) = p_{1,2}$ & $H_{\epsilon}\left(\tau\right)=p_{4,2}$ & 2 \\
$s_{1,2,3}$ & $2 \tau+1-p_{1,2}$ & $ 1 \rightarrow 0$ & $p_{4,2}+1-p_{1,2}$ & 3 \\
\hline \hline
\end{tabular}
\end{itemize}
\end{table}

\begin{table}[ht]
\caption{Analytic prediction of phase dynamic for parameters $\tau=0.31$, $\epsilon=0.025$, $I=1.04$ and $\gamma=1$, realizing a   $S_{3} \times S_{2}$ cycle.}
\label{dt-s3s2}
\footnotesize
\begin{itemize}[leftmargin=.0in]
\item[]\begin{tabular}{ c l  l  l  c }
\hline \hline
event & time &  $\phi_{1},\phi_{2},\phi_{3}$  &  $\phi_{4},\phi_{5}$  & event num. \\
\hline
$s_{1,2,3}$ & 0.000000 & 0.000000 & 0.501612 & 0 \\
$r_{1,2,3}; s_{4,5}$ & 0.310000 & 0.353450 & 0.000000 & 1 \\
$r_{4,5}$ & 0.620000 & 0.829344 & 0.330956 & 2 \\
$s_{1,2,3}$ & 0.790655 & 0.000000 & 0.501612 & 3 \\
\hline\hline
\end{tabular}
\end{itemize}
\end{table}

\subsection*{Unperturbed $S_{3}\times S_{2}$ dynamic, $S_{2}$ unstable}
This map describe the partner orbit of table \ref{ut-s3s2-1}, once they appear for the same range of the parameter, but for different initial conditions. The initial condition here is that pulses from oscillators 4 and 5 are received ($r_{(4,5)}$) exactly at time 0, forcing oscillators 1, 2 and 3 to fire ($s_{1,2,3}$), what define the first event; the second event is the reception of these pulses ($r_{1,2,3}$) $\tau$ time later; the third event is the reset of oscillators 4 and 5 upon reaching the threshold ($s_{4,5}$), and consequently the generation of two new pulses; the last event is the reception of these pulses ($r_{4,5}$), which causes oscillators 1, 2 and 3 to generate one pulse ($s_{1,2,3}$). In this case the $S_{3}$ cluster is stable, since any small variation on the phase of its components will disappear in the next cycle when all are reset together by incoming pulses \cite{AT2005}.

\begin{table}[ht]
\caption{Analytic table of condition for an unperturbed $S_{3} \times S_{2}$ dynamic, $S_{2}$ unstable.}
\label{ut-s3s2-2}
\centering
\footnotesize
\begin{itemize}[leftmargin=.0in]
\item[]\begin{tabular}{l  l  l  l  l }
\hline\hline
 event & time &  $\phi_{1},\phi_{2},\phi_{3}$  &  $\phi_{4},\phi_{5}$  & event num. \\
\hline
$r_{(4,5)};s_{1,2,3}$ & 0 & 0 & C & 0 \\

$r_{1,2,3}$ & $\tau$ & $H_{2\epsilon}\left(\tau\right) = p_{1,1}$ & $H_{3\epsilon}\left(C+\tau\right)=p_{4,1}$ & 1 \\

$s_{4,5}$ & $1-p_{4,1}$ & $1+p_{1,1}-p_{4,1} = p_{1,2}$ & $1\rightarrow 0$ & 2 \\

$r_{4,5};s_{1,2,3}$ & $\tau+1-p_{4,1}$ & $H_{2\epsilon}\left(p_{1,2}+\tau\right) > 1 \rightarrow 0$ & $H_{\epsilon}\left(\tau\right)$ & 3 \\
\hline\hline
\end{tabular}
\end{itemize}
\end{table}

\subsection*{Unperturbed $S_{4} \times S_{1}$ dynamic }
This map describes another period-one attractor, where no pulse was sent before time 0. The first event is the signal sent by oscillators 1, 2, 3 and 4 ($s_{1,2,3,4}$); the second event is the reception of these pulses after $\tau$ time units ($r_{1,2,3,4}$), which makes oscillator 5 to generate one pulse due to a supra-threshold input ($s_{5}$); the third event is the reception of this pulse at time $2\tau$ ($r_{5}$); the last event is the pulse generation from oscillators 1, 2, 3, and 4 upon reaching the threshold. If the event sequence is preserved, the condition $B=\tau+1-p_{1,2}$ follows from the periodicity of the orbit. A numerical example of this orbit structure is presented in table \ref{dt-s4s1}.
\begin{table}[ht]
\caption{Analytic table of condition for an unperturbed $S_{4} \times S_{1}$ dynamic.}
\label{ut-s4s1}
\centering
\footnotesize
\begin{itemize}[leftmargin=.0in]
\item[]\begin{tabular}{l  l  l  l  l }
\hline \hline
 event & time &  $\phi_{1},\phi_{2},\phi_{3},\phi_{4}$  &  $\phi_{5}$  & event num. \\
\hline
$s_{1,2,3,4}$ & 0 & 0 & B & 0 \\

$r_{1,2,3,4}; s_{5}$ & $\tau$ & $H_{3\epsilon}\left(\tau\right) = p_{1,1}$ & $H_{4\epsilon}\left(B+\tau\right)>1 \rightarrow 0$ & 1 \\
$r_{5}$ & $2 \tau$ & $H_{ \epsilon}\left(p_{1,1}+\tau\right) = p_{1,2}$ & $\tau$ & 2 \\
$s_{1,2,3,4}$ & $2 \tau+1-p_{1,2}$ & $ 1 \rightarrow 0$ & $\tau+1-p_{1,2}$ & 3 \\
\hline\hline
\end{tabular}
\end{itemize}
\end{table}

\begin{table}[ht]
\caption{Analytic prediction of phase dynamic for parameters $\tau=0.27$, $\epsilon15$, $I=1.1$ and $\gamma=1$, realizing a   $S_{4} \times S_{1}$ cycle.}
\label{dt-s4s1}
\centering
\footnotesize
\begin{itemize}[leftmargin=.0in]
\item[]\begin{tabular}{ c l  l  l  c }
\hline\hline
 event & time &  $\phi_{1},\phi_{2},\phi_{3},\phi_{4}$  &  $\phi_{5}$  & event num. \\
\hline
$s_{1,2,3,4}$ & 0.000000 & 0.000000 & 0.672908 & 0 \\
$r_{1,2,3,4}; s_{5}$ & 0.270000 & 0.303940 & 0.000000 & 1 \\
$r_{5}$ & 0.540000 & 0.597091 & 0.270000 & 2 \\
$s_{1,2,3,4}$ & 0.942909 & 0.000000 & 0.672908 & 3 \\
\hline\hline
\end{tabular}
\end{itemize}
\end{table}

\subsection*{Unperturbed $S_{2} \times S_{2} \times S_{1}$ dynamic}
For this map, the initial conditions are such that pulses from oscillators 3 and 4 will be received at time $\tau^{'}$ after time 0. The first event is the pulse generation from oscillators 1 and 2 upon reaching the threshold ($s_{1,2}$); the second is the reception of pulses from oscillators 3 and 4 ($r_{(3,4)}$) at time $\tau^{'}$ and the pulse generation from oscillator 5 ($s_{5}$) caused by this supra-threshold input; the third event is the receive of the pulses from oscillators 1 and 2 ($r_{1,2}$) at time $\tau$ that forces oscillators 3 and 4 to elicit a pulse ($s_{3,4}$); the forth event is the reception of the pulse coming from oscillator 5 ($r_{5}$); the last event is the pulse generation from oscillators 1 and 2 ($s_{1,2}$) upon reaching the threshold. This map implies three periodic conditions to describe a period-one attractor: $D=p_{3,3} +1 - p_{1,3}$, $E = p_{5,3}+1-p_{1,3}$, and $\tau=2\tau^{'}+1-p_{1,3}$. A example of this structure is presented in table \ref{dt-s2s2s1}.

\begin{table}[ht]
\caption{Analytic table of condition for an unperturbed $S_{2} \times S_{2} \times S_{1}$ dynamic.}
\label{ut-s2s2s1}
\centering
\scriptsize
\begin{itemize}[leftmargin=.0in]
\item[]\begin{tabular}{ p{1cm} p{1.3cm} p{3.4cm} p{3.5cm} p{3cm} p{0.5cm}}
\hline\hline
 event & time &  $\phi_{1},\phi_{2}$ & $\phi_{3},\phi_{4}$  &  $\phi_{5}$  & event num. \\
\hline
$s_{1,2}$ & 0 & 0 & D & E & 0 \\
$r_{(3,4)}; s_{5}$ & $\tau^{'}$ & $H_{2\epsilon}\left(\tau^{ }\right) = p_{1,1}$ & $H_{\epsilon}\left(D+\tau^{'}\right)=p_{3,1}$ & $H_{2\epsilon}\left(E+\tau^{'}\right)$ $>1 \rightarrow 0$ & 1 \\
$r_{1,2};s_{3,4}$ & $\tau$ & $H_{ \epsilon}\left(p_{1,1}+\tau-\tau^{'}\right) = p_{1,2}$ & $H_{ 2\epsilon}\left(p_{3,1}+\tau-\tau^{'}\right)$ $>1 \rightarrow 0$ & $H_{2\epsilon}\left(\tau -\tau^{'}\right)$ & 2 \\
$r_{5}$ & $\tau+\tau^{'}$ & $H_{\epsilon}\left(p_{1,2}+\tau^{'}\right)=p_{1,3}$ & $H_{\epsilon}\left(p_{3,2}+\tau^{'}\right)=p_{3,3}$ & $p_{5,2}+\tau^{'}=p_{5,3}$ & 3 \\
$s_{1,2}$ & $\tau+\tau^{'}+1-p_{1,3}$ & $1 \rightarrow 0$ & $p_{3,3} +1 - p_{1,3} $ & $p_{5,3}+1-p_{1,3}$ & 4 \\
\hline\hline
\end{tabular}
\end{itemize}
\end{table}

\begin{table}[ht]
\caption{Analytic prediction of phase dynamic for parameters $\tau=0.49$, $\epsilon=0.025$, $I=1.04$ and $\gamma=1$, realizing a   $S_{2} \times S_{2} \times S_{1}$ cycle.}
\label{dt-s2s2s1}
\centering
\footnotesize
\begin{itemize}[leftmargin=.0in]
\item[]\begin{tabular}{ c l  l  l  l  c }
\hline\hline
  event & time &  $\phi_{1},\phi_{2}$ & $\phi_{3},\phi_{4}$  &  $\phi_{5}$  & event num. \\
\hline
$s_{1,2}$ & 0.000000 & 0.000000 & 0.381978 & 0.795680 & 0 \\
$r_{(3,4)}; s_{5}$ & 0.119095 & 0.141656 & 0.541358 & 0.000000 & 1 \\
$r_{1,2};s_{3,4}$ & 0.490000 & 0.554491 & 0.000000 & 0.424775 & 2 \\
$r_{5}$ & 0.609095 & 0.748191 & 0.130168 & 0.543870 & 3 \\
$s_{1,2}$ & 0.860904 & 0.000000 & 0.381978 & 0.795680 & 4 \\
\hline\hline
\end{tabular}
\end{itemize}
\end{table}

\section{Perturbed dynamic, return maps.}
In this appendix we present three tables that show the changes to the dynamics described in tables \ref{ut-s3s2-1}, \ref{ut-s4s1} and \ref{ut-s2s2s1} due an incremental perturbation $\bm{\delta}=\left(0,\delta_{2},\delta_{3},\delta_{4},\delta_{5}\right)$, where $0<\delta_{2}<\delta_{3}<\delta_{4}<\delta_{5}\ll 1$. In all cases without lose of generality oscillator 1 was taken as the referential phase to define the new cycle, since this doesn't affect the dynamics itself but only the point of reference. The notation is the same as in \ref{app-rMaps}.

\begin{sidewaystable}[]
\caption{Perturbed $S_{3} \times S_{2}$ dynamic.}
\label{pt-s3s2}
\centering
\scriptsize
\begin{tabular}{p{1cm} p{1.3cm} p{3cm} p{3cm} p{3cm} p{3cm} p{3cm} p{0.5cm}}	
\hline\hline
 event & time &  $\phi_{1}$ &  $\phi_{2}$ & $\phi_{3}$ & $\phi_{4}$  &  $\phi_{5}$  & event num. \\
\hline
&&&&&&&\\
$s_{1,(2,3)}$ & 0 & 0 & $\delta_{2}$ & $\delta_{3}$ & $A+\delta_{4} $& $A+\delta_{5}$ & 0 \\
&&&&&&&\\
&&&&&&&\\
$r_{3}$ & $\tau-\delta_{3}$ & $H_{\epsilon}\left(\tau-\delta_{3} \right) = p_{1,1a}$ & $H_{\epsilon}\left(\delta_{2} +\tau-\delta_{3} \right)$  $=p_{2,1a}$ & $\tau$ & $H_{\epsilon}\left(A+\delta_{4}+\tau-\delta_{3} \right)$ $=p_{4,1a}$ & $H_{\epsilon}\left(A+\delta_{5}+\tau-\delta_{3} \right)$ $=p_{5,1a}$ & 1a \\
&&&&&&&\\
&&&&&&&\\
$r_{2};s_{4,5}$ & $\tau-\delta_{2}$ & $H_{ \epsilon}\left(p_{1,1a}+\delta_{3}-\delta_{2}\right)$ $ = p_{1,1b}$ & $p_{2,1a}+\delta_{3}-\delta_{2}=p_{2,1b}$ & $H_{\epsilon}\left(\tau +\delta_{3}-\delta_{2}\right)$ $=p_{3,1b}$ & $H_{\epsilon}\left(p_{4,1a}+\delta_{3}-\delta_{2}\right)$ $>1 \rightarrow 0$ & $H_{\epsilon}\left(p_{5,1a}+\delta_{3}-\delta_{2}\right)$ $>1 \rightarrow 0$ & 1b \\
&&&&&&&\\
&&&&&&&\\
$r_{1}$ & $\tau$ & $p_{1,1b}+\delta_{2}=p_{1,1}$ &$H_{\epsilon}\left(p_{2,1b}+\delta_{2}\right)=p_{2,1}$ & $H_{\epsilon}\left(p_{3,1b}+\delta_{2} \right)=p_{3,1}$ & $H_{\epsilon}\left(\delta_{2}\right)=p_{4,1}$ & $p_{4,1}$ & 1  \\
&&&&&&&\\
&&&&&&&\\
$r_{4,5}$ & $2\tau-\delta_{2}$ & $H_{2\epsilon}\left(p_{1,1}+\tau-\delta_{2}\right)$ $=p_{1,2}$ & $H_{2\epsilon}\left(p_{2,1}+\tau-\delta_{2}\right)$ $=p_{2,2}$ & $H_{2\epsilon}\left(p_{3,1}+\tau-\delta_{2}\right)$ $=p_{3,2}$ & $H_{2\epsilon}\left(p_{4,1}+\tau-\delta_{2}\right)$ $=p_{4,2}$ & $p_{4,2}$  & 2 \\
&&&&&&&\\
&&&&&&&\\
$s_{3}$ & $2\tau-\delta_{2}$ $+1-p_{3,2}$ & $p_{1,2}+1-p_{3,2}$ & $p_{2,2}+1-p_{3,2}$ & $1 \rightarrow 0$ & $p_{4,2}+1-p_{3,2}$ & $p_{4,2}+1-p_{3,2}$ & 3a \\
&&&&&&&\\
&&&&&&&\\
$s_{2}$ & $2\tau-\delta_{2}$ $+1-p_{2,2}$ & $p_{1,2}+1-p_{2,2}$ & $1 \rightarrow 0$ & $p_{3,2}-p_{2,2}$& $p_{4,2}+1-p_{2,2}$ & $p_{4,2}+1-p_{2,2}$ & 3b \\
&&&&&&&\\
&&&&&&&\\
$s_{1}$& $2\tau-\delta_{2}$ $+1-p_{1,2}$ & $1 \rightarrow 0$ & $p_{2,2}-p_{1,2}$& $p_{3,2}-p_{1,2}$ & $p_{4,2}+1-p_{1,2}$ & $p_{4,2}+1-p_{1,2}$ & 3 \\
\hline\hline
\end{tabular}
\end{sidewaystable}

\begin{sidewaystable}[]
\caption{Perturbed $S_{4} \times S_{1}$ dynamic.}
\label{pt-s4s1}
\centering
\scriptsize
\begin{tabular}{p{1cm} p{1.3cm} p{3cm} p{3cm} p{3cm} p{3cm} p{3cm} p{0.5cm}}
\hline\hline
 event & time &  $\phi_{1}$ &  $\phi_{2}$ & $\phi_{3}$ & $\phi_{4}$  &  $\phi_{5}$  & event num. \\
\hline
&&&&&&&\\
$s_{1,(2,3,4)}$ & 0 & 0 & $\delta_{2}$ & $\delta_{3}$ & $\delta_{4} $& $B+\delta_{5}$ & 0 \\
&&&&&&&\\
&&&&&&&\\
$r_{4};s_{5}$ & $\tau-\delta_{4}$ & $H_{\epsilon}\left(\tau-\delta_{4} \right) = p_{1,1a}$ & $H_{\epsilon}\left(\tau+\delta_{2}-\delta_{4} \right)=p_{2,1a}$ & $H_{\epsilon}\left(\tau+\delta_{3}-\delta_{4} \right)=p_{3,1a}$& $\tau$ & $H_{\epsilon}\left(B+\delta_{5}+\tau-\delta_{4} \right)>1 \rightarrow 0$  & 1a \\
&&&&&&&\\
&&&&&&&\\
$r_{3}$ & $\tau-\delta_{3}$ & $H_{ \epsilon}\left(p_{1,1a}+\delta_{4}-\delta_{3}\right) = p_{1,1b}$ & $H_{ \epsilon}\left(p_{2,1a}+\delta_{4}-\delta_{3}\right) = p_{2,1b}$ & $p_{3,1a}+\delta_{4}-\delta_{3}=p_{3,1b}$ & $H_{\epsilon}\left(\tau +\delta_{4}-\delta_{3}\right)=p_{4,1b}$ & $H_{\epsilon}\left(\delta_{4}-\delta_{3}\right)=p_{5,1b}$ & 1b \\
&&&&&&&\\
&&&&&&&\\
$r_{2}$ & $\tau-\delta_{2}$ & $H_{ \epsilon}\left(p_{1,1b}+\delta_{3}-\delta_{2}\right) = p_{1,1c}$ &$p_{2,1b}+\delta_{3}-\delta_{2}=p_{2,1c}$ & $H_{\epsilon}\left(p_{3,1b}+\delta_{3}-\delta_{2} \right)=p_{3,1c}$ & $H_{\epsilon}\left(p_{4,1b}+\delta_{3}-\delta_{2}\right)=p_{4,1c}$ & $H_{\epsilon}\left(p_{5,1b}+\delta_{3}-\delta_{2}\right)=p_{5,1c}$ & 1c  \\
&&&&&&&\\
&&&&&&&\\
$r_{1}$ & $\tau$ & $p_{1,1c}+\delta_{2}=p_{1,1}$ & $H_{\epsilon}\left(p_{2,1c}+\delta_{2} \right)=p_{2,1}$ & $H_{\epsilon}\left(p_{3,1c}+\delta_{2} \right)=p_{3,1}$ & $H_{\epsilon}\left(p_{4,1c}+\delta_{2} \right)=p_{4,1}$ & $H_{\epsilon}\left(p_{5,1c}+\delta_{2}\right)=p_{5,1}$ & 1  \\
&&&&&&&\\
&&&&&&&\\
$r_{5}$ & $2\tau-\delta_{4}$ & $H_{\epsilon}\left(p_{1,1}+\tau-\delta_{4}\right)=p_{1,2}$ & $H_{\epsilon}\left(p_{2,1}+\tau-\delta_{4}\right)=p_{2,2}$ & $H_{\epsilon}\left(p_{3,1}+\tau-\delta_{4}\right)=p_{3,2}$ & $H_{\epsilon}\left(p_{4,1}+\tau-\delta_{4}\right)=p_{4,2}$ & $p_{5,1}+\tau-\delta_{4}=p_{5,2}$  & 2 \\
&&&&&&&\\
&&&&&&&\\
$s_{4}$ & $2\tau-\delta_{4}+1-p_{4,2}$ & $p_{1,2}+1-p_{4,2}$ & $p_{2,2}+1-p_{4,2}$ & $p_{3,2}+1-p_{4,2}$ & $1 \rightarrow 0$ & $p_{5,2}+1-p_{4,2}$ & 3a \\
&&&&&&&\\
&&&&&&&\\
$s_{3}$ & $2\tau-\delta_{4}+1-p_{3,2}$ & $p_{1,2}+1-p_{3,2}$ & $p_{2,2}+1-p_{3,2}$ & $1 \rightarrow 0$ & $p_{4,2}-p_{3,2}$ & $p_{5,1}+\tau-\delta_{4}+1-p_{3,2}=p_{5,3b}$ & 3b \\
&&&&&&&\\
&&&&&&&\\
$s_{2}$ & $2\tau-\delta_{4}+1-p_{2,2}$ & $p_{1,2}+1-p_{2,2}$ & $1 \rightarrow 0$ & $p_{3,2}-p_{2,2}=p_{3,3c}$& $p_{4,2}-p_{2,2}$ & $p_{5,3b}+p_{3,2}-p_{2,2}=p_{5,3c}$ & 3c \\
&&&&&&&\\
&&&&&&&\\
$s_{1}$& $2\tau-\delta_{4}+1-p_{1,2}$ & $1 \rightarrow 0$ & $p_{2,2}-p_{1,2}$& $p_{3,2}-p_{1,2}$ & $p_{4,2}-p_{1,2}$ & $p_{5,3c}+p_{2,2}-p_{1,2}$ & 3 \\
\hline\hline
\end{tabular}
\end{sidewaystable}

\begin{sidewaystable}[]
\caption{Perturbed $S_{2} \times S_{2} \times S_{1}$ dynamic.}
\label{pt-s2s2s1}
\centering
\scriptsize
\begin{tabular}{p{1cm} p{1cm} p{3cm} p{0.1cm} p{3cm} p{3cm} p{0.1cm} p{3cm} p{0.1cm} p{2.7cm} p{0.5cm}}
\hline\hline
 event & time &  $\phi_{1}$ &&  $\phi_{2}$ & $\phi_{3}$ && $\phi_{4}$  &&  $\phi_{5}$ & event num. \\
\hline
&&&&&&&&&&\\
$(s_{3,4})$; $s_{1,(2))}$ & 0 & 0 && $\delta_{2}$ & $D+\delta_{3}$ && $D+\delta_{4} $&& $E+\delta_{5}$ & 0 \\
&&&&&&&&&&\\
&&&&&&&&&&\\
$r_{3,4};s_{5}$ & $\tau^{'}$ & $H_{2\epsilon}\left(\tau^{'} \right) = p_{1,1}$ && $H_{2\epsilon}\left(\tau^{'}+\delta_{2} \right)=p_{2,1}$ & $H_{\epsilon}\left(D+\tau^{'}+\delta_{3}\right)=p_{3,1}$&& $H_{\epsilon}\left(D+\tau^{'}+\delta_{4}\right)=p_{4,1}$ && $H_{2\epsilon}\left(E+\tau^{'}+\delta_{5} \right)>1 \rightarrow 0$  & 1 \\
&&&&&&&&&&\\
&&&&&&&&&&\\
$r_{2};s_{3,4}$& $\tau-\delta_{2}$ & $H_{ \epsilon}\left(p_{1,1}+\tau-\tau^{'}-\delta_{2}\right) = p_{1,2a}$ &&$p_{2,1}+\tau-\tau^{'}-\delta_{2}=p_{2,2a}$ & $H_{\epsilon}\left(p_{3,1}+\tau-\tau^{'}-\delta_{2} \right)>1 \rightarrow 0$ && $H_{\epsilon}\left(p_{4,1}+\tau-\tau^{'}-\delta_{2} \right)>1 \rightarrow 0$&& $H_{\epsilon}\left(\tau-\tau^{'}-\delta_{2}\right)=p_{5,2a}$ & 2a  \\
&&&&&&&&&&\\
&&&&&&&&&&\\
$r_{1}$ & $\tau$ & $p_{1,2a}+\delta_{2}=p_{1,2}$ && $H_{\epsilon}\left(p_{2,2a}+\delta_{2} \right)=p_{2,2}$ & $H_{\epsilon}\left(\delta_{2} \right)=p_{3,2}$ && 
$H_{\epsilon}\left(\delta_{2} \right)=p_{4,2}$ && $H_{\epsilon}\left(p_{5,2a}+\delta_{2}\right)=p_{5,2}$ & 2  \\
&&&&&&&&&&\\
&&&&&&&&&&\\
$r_{5}$ & $\tau-\tau^{'}$ & $H_{\epsilon}\left(p_{1,2}+\tau^{'}\right)=p_{1,3}$ && $H_{\epsilon}\left(p_{2,2}+\tau^{'}\right)=p_{2,3}$ & $H_{\epsilon}\left(p_{3,2}+\tau^{'}\right)=p_{3,3}$ && $p_{3,3}$ && $p_{5,2}+\tau^{'}$  & 3 \\
&&&&&&&&&&\\
&&&&&&&&&&\\
$s_{2}$ & $\tau+\tau^{'}+1-p_{2,3}$ & $p_{1,3}+1-p_{2,3}$ && $1 \rightarrow 0$ & $p_{3,3}+1-p_{2,3}=p_{3,4a}$&& $p_{3,4}$ && $p_{5,2}+\tau^{'}+1-p_{2,3}$ & 4a \\
&&&&&&&&&&\\
&&&&&&&&&&\\
$s_{1}$&  $\tau+\tau^{'}+1-p_{1,3}$  & $1 \rightarrow 0$ && $p_{2,3}-p_{1,3}$& $p_{3,3}+1-p_{1,3}$ && $p_{3,3}+1-p_{1,3}$ && $p_{5,2}+\tau^{'}+1-p_{1,3}$ & 4 \\
\hline\hline
\end{tabular}
\end{sidewaystable}

\clearpage

\bibliographystyle{unsrt}
\bibliography{ControlledPerturbation2009}

\end{document}